\title[Adapting Beyond the Depth Limit]{Adapting Beyond the Depth Limit:\texorpdfstring{\\}{ }Counter Strategies in Large Imperfect Information Games}
\author{David Milec}
\affiliation{
  \institution{AI Center, FEE, CTU in Prague}
  \country{Czech Republic}}
\email{milecdav@fel.cvut.cz}
\author{Vojtěch Kovařík}
\affiliation{
  \institution{AI Center, FEE, CTU in Prague}
  \country{Czech Republic}}
\email{vojta.kovarik@gmail.com}
\author{Viliam Lisý}
\affiliation{
  \institution{AI Center, FEE, CTU in Prague}
  \country{Czech Republic}}
\email{viliam.lisy@agents.fel.cvut.cz}
\begin{abstract}
We study the problem of adapting to a known sub-rational opponent during online play while remaining robust to rational opponents.
We focus on large imperfect-information (zero-sum) games, which makes it impossible to inspect the whole game tree at once and necessitates the use of depth-limited search.
However, all existing methods assume rational play beyond the depth-limit,
    which only allows them to adapt a very limited portion of the opponent's behaviour.
We propose an algorithm \textit{Adapting Beyond Depth-limit} (\ABDshort{})
    that uses a strategy-portfolio approach -- which we refer to as matrix-valued states -- for depth-limited search.
This allows the algorithm to fully utilise all information about the opponent model,
    making it the first robust-adaptation method to be able to do so in large imperfect-information games.
As an additional benefit, the use of matrix-valued states makes the algorithm simpler than traditional methods based on optimal value functions.
Our experimental results in poker and battleship show that \ABDshort{} yields more than a twofold increase in utility when facing opponents who make mistakes beyond the depth limit and also delivers significant improvements in utility and safety against randomly generated opponents.
%
% Players deviate from full rationality due to cognitive limitations, incomplete information, and constrained computational resources in many real-world applications, like security and market competition. These subrational behaviors create challenges for decision-making systems that aim to interact with such players, especially when relying on Nash equilibrium, which can be overly conservative against subrational opponents. On the other hand, fully adapting to these opponents introduces the risk of exploitation.
%
% Previous research has addressed this challenge in large games and employed depth-limited solving to make the computation feasible. However, existing depth-limited approaches rely on fixed strategies or a single value function beyond the depth limit, which limits adaptability against subrational opponents.
%
% This paper proposes a novel approach that utilizes a strategy portfolio to adapt to an opponent's actions beyond the depth limit in large, imperfect information games.
% We provide theoretical guarantees for our method and demonstrate its effectiveness in experiments.
% Our results show that the proposed approach yields more than a twofold increase in utility when facing opponents who make mistakes beyond the depth limit and also delivers significant improvements in utility and safety against randomly generated opponents.
\end{abstract}
\keywords{opponent adaptation, imperfect information, best response, depth-limited search}
\begin{document}

\newcommand{\MaVS}{matrix-valued state}
\newcommand{\MuVS}{multi-valued state}
\newcommand{\MaVSbig}{Matrix-Valued State}
\newcommand{\MuVSbig}{Multi-Valued State}
\newcommand{\MaVSshort}{MaVS}
\newcommand{\MuVSshort}{MuVS}

\newcommand{\ABDshort}{ABD}
    \renewcommand{\ABDshort}{\texttt{ABD}}
\newcommand{\ABD}{adapting beyond depth-limit}
\newcommand{\ABDbig}{Adapting Beyond Depth-Limit}
\newcommand{\ABDstart}{Adapting beyond depth-limit}

\newcommand{\SADshort}{ABD}
\newcommand{\SAD}{adaptating beyond depth-limit}
\newcommand{\SADbig}{Adaptating Beyond Depth-Limit}

\newcommand{\gain}{\textbf{gain}}
\newcommand{\exploitability}{\textbf{exploitability}}

\newcommand{\optline}{OPT}
\newcommand{\adaptingline}{ADPT}
\newcommand{\riskyline}{RISK}

% added by V

% Comments and other toggles:
\newboolean{commentsactivated}
\setboolean{commentsactivated}{false}
\setboolean{commentsactivated}{true}
\newcommand{\dm}[1]{\ifthenelse{\boolean{commentsactivated}}{{\color{blue} {\em DM: #1 }}}{}}
\newcommand{\vl}[1]{\ifthenelse{\boolean{commentsactivated}}{{\color{magenta} {\em VL: #1 }}}{}}
\newcommand{\vk}[1]{\ifthenelse{\boolean{commentsactivated}}{{\color{red} {\em VK: #1 }}}{}}

\newcommand{\Q}{\mathbb{Q}}
\newcommand{\N}{\mathbb{N}}
\newcommand{\R}{\mathbb{R}}
\newcommand{\E}{\mathbb{E}}

\newcommand{\mc}{\mathcal}
\newcommand{\mb}{\mathbb}
\newcommand{\defword}[1]{\textbf{\boldmath{#1}}}

\newcommand{\symbolPlaceholder}{{\, \cdot \, }}
\newcommand{\symbolPlaceholderBrackets}{{(\, \cdot \, )}}

\newcommand{\game}{{\mc{G}}}
\newcommand{\actions}{\mc A}
\newcommand{\action}{a}
\newcommand{\actionPl}{a_1}
\newcommand{\actionOpp}{a_2}
\newcommand{\pureStrategy}{s}
\newcommand{\policy}{\sigma}
\newcommand{\policyAlt}{{\policy'}}
\newcommand{\policies}{\Sigma}
\newcommand{\chance}{\textnormal{c}}
\newcommand{\playerSet}{\mc N}
\newcommand{\playerFunction}{pl}
\newcommand{\pl}{i}
\newcommand{\plAlt}{j}
\newcommand{\opp}{\others}          % opponent
\newcommand{\others}{{\textnormal{-}\pl}}
\newcommand{\Pl}{\textnormal{P\pl}}
\newcommand{\Plone}{\text{P1}}
\newcommand{\Pltwo}{\text{P2}}
\newcommand{\utility}{u}

\newcommand{\histories}{\mc H}
\newcommand{\history}{h}
\newcommand{\historyAlt}{g}
\newcommand{\prefix}{\sqsubset}                   % history prefix
\newcommand{\suffix}{\sqsupset}                   % history suffix
\newcommand{\gameroot}{\textnormal{root}}

\newcommand{\leaf}{z}
\newcommand{\leaves}{\mc Z}

\newcommand{\infoset}{I}
\newcommand{\infosetAlt}{\infoset'}
\newcommand{\infosetOpp}{J}
\newcommand{\infosetOppAlt}{\infosetOpp'}
\newcommand{\InfosetTree}{\mc I}
\newcommand{\infostate}{s}

\newcommand{\portfolio}{\mb P}
\newcommand{\pStrategy}{\pi}
\newcommand{\fixedOppStrategy}{\policy_2^\textnormal{fix}}

\newcommand{\CDBR}{\texttt{CDBR}}
\newcommand{\CDRNR}{\texttt{CDRNR}}
\newcommand{\CFR}{\texttt{CFR}}

% Vojta's new stuff
\newcommand{\RNRgame}{robust adaptation game}
\newcommand{\RAG}{\texttt{RbAdapt}}
\newcommand{\nOfSamples}{N_\textnormal{sample}}
\newcommand{\NE}{\textnormal{NE}}
\newcommand{\publicSet}{S}

%%% The following commands remove the headers in your paper. For final 
%%% papers, these will be inserted during the pagination process.

\pagestyle{fancy}
\fancyhead{}

%%% The next command prints the information defined in the preamble.

\maketitle 

%%%%%%%%%%%%%%%%%%%%%%%%%%%%%%%%%%%%%%%%%%%%%%%%%%%%%%%%%%%%%%%%%%%%%%%%

\section{Introduction}

There has been a lot of progress on solving large imperfect information games,
    recently often enabled by depth-limited methods \cite{moravvcik2017deepstack,brown2018superhuman,brown2019superhuman}.
The high-level idea behind these methods is to avoid searching through the full game tree,
    which would be intractable.
Instead, depth-limited search only looks a few steps ahead
    and uses a value function -- typically a trained neural network -- to approximate the events that occur beyond the depth limit.
However, compared to perfect-information games such as chess and Go,
    the presence of imperfect information complicates depth-limited search in a number of ways.
    For example,
        being able to start the search from the current state of the game,
        as well as determining which value function to use,
        both become conceptually (and computationally) difficult \cite{burch2014solving}.
        
While most existing work is concerned with fully rational opponents,
    this assumption will often be false in practice,
    for example because the opponent is cognitively or computationally limited \cite{an2013deployed,nguyen2013analyzing}.
Several prior works investigate depth-limited methods for dealing with subrational opponents \cite{liu2022safe,milec2021continual,ge2024safe}.
However, these methods are conceptually quite challenging and the research on this topic seems far from complete.
In particular, one limitation of the existing methods is their inability to take advantage of the opponent's mistakes that only occur beyond the current depth limit.
    (As a contrived example, imagine that
        saying ``please let me win'' would always cause the opponent to give up in $d+1$ steps from now.
        If a method only looks $d$ steps ahead and assumes that the opponent plays optimally afterward, it will be unable to exploit this vulnerability.)

        % In game theory, rationality is often assumed when modeling players’ behaviors and strategies. However, in many practical applications, real-world opponents may deviate from full rationality due to bounded computational resources, incomplete information, or cognitive limitations. These subrational opponents pose unique challenges for decision-making systems that rely on traditional game-theoretic models, necessitating strategies that adapt to the opponent’s behavior \cite{hu2023can}.
        
        % A significant portion of game-theoretic research has focused on developing algorithms to deal with rational or near-rational players. However, in scenarios involving subrational agents, classical approaches may be inadequate. In complex games where the size of the game tree grows exponentially, the computational cost of finding an optimal solution becomes prohibitive. In such cases, depth-limited solving is often employed to reduce the computational burden by focusing on a limited portion of the game tree.

        % While depth-limited solving provides a practical approach to decision-making in large games, existing methods often assume a fixed strategy or use a single value function beyond the depth limit. This assumption can limit the ability to respond effectively to different subrational opponents. The challenge becomes how to model the opponent's strategy beyond the depth-limited portion while maintaining adaptive responses.

In this paper, we study the problem of \textbf{robust adaptation}
    -- that is, performing well against a known opponent that is not fully rational
        (represented by some fixed stochastic strategy)
    while not being overly vulnerable to unexpected opponents.
We are interested in solving this problem for large (two-player) \textbf{imperfect-information} games.
Since such games are typically too complex to compute the full strategy at once,
    we focus on the \textbf{online play} setting
    (where the game is being played in real time).
We operationalize the robust adaptation problem by making the conservative assumption that
    (rather than facing the fixed opponent all the time)
    there is some small probability $p$ that we are instead facing a \textit{worst-case} opponent (i.e., one that attempts to minimize our utility).
Importantly, this formulation of the problem does not require the knowledge of the opponent's preferences
    -- in other words, \textit{despite also being interested in general-sum games, it suffices to consider the problem for \textbf{zero-sum games}.}

Disregarding the issue of the prohibitive size of the game,
    robust adaptation has a straightforward solution \cite{johanson2008computing}:
    Consider an imperfect-information game that consists of two copies of the original game and an extra chance decision which secretly determines whether we face a fixed or fully rational opponent, then find a Nash equilibrium (NE) of this game
    (\Cref{fig:RNR_game}; for more details, see \Cref{sec:background}).
    Note that the fully rational opponent is aware that we are attempting to find a robust response to the fixed opponent, so their strategy will generally be different from the NE of the original game.
    This makes this problem more difficult than simply finding a best response to a convex combination of the fixed opponent and Nash equilibrium.
% Disregarding the issue of the prohibitive size of the game, this problem could be solved by computing a \textit{restricted Nash response} \cite{johanson2008computing}.
% This method (described in detail in \Cref{sec:background}) assumes that
%     player one is fully rational, but knows that with probability $p$, they will be facing an opponent who uses a fixed strategy $\fixedOppStrategy$.
%     However, with the remaining $1-p$ probability, they are facing a fully rational opponent, and the whole situation is common knowledge.
% This situation can be straightforwardly modelled as a game where player one has imperfect information about which opponent they are facing (\Cref{fig:background_DLS}).\

However, this approach would be intractable in large games,
    which means that the main difficulty lies in combining this method with depth-limited search.
% (One might intuitively imagine that the fully rational opponent will be playing a NE of the original game, which would greatly simplify the problem.
% However, this is not the case -- by assumption, the fully rational opponent will be best-responding to our strategy, which makes naive depth-limited approaches to robust adaptation inapplicable.)
To find a more scalable solution to the robust adaptation problem,
    we propose a novel approach we call \textit{Adapting Beyond Depth-limit} (\ABDshort{}).
Our point of departure is that the traditional approach to depth-limited solving is to call an optimal value function upon reaching the depth limit \cite{moravvcik2017deepstack,kovavrik2023value}.
In contrast, \ABDshort{} uses what we refer to as \textit{matrix-valued states}:
    upon reaching the depth limit, each player is given a choice of one strategy in their \textit{portfolio},
    after which both players receive the utility that corresponds to their joint choice \cite{brown2018depth,kovavrik2023value}.
The key insight behind \ABDshort{} is that by leveraging the \MaVS{} representation,
    we can emulate the sub-rational opponent by replacing their portfolio with the strategy given by their opponent model.

\textbf{Structure of the Paper:}
In \Cref{sec:related_work}, we review existing approaches to depth-limited solving and their limitations when applied to subrational opponents in large-scale games.
We then describe the intuitions which motivate the proposed approach (\Cref{sec:motivation})
    and follow with the algorithm's formal description and analysis (\Cref{sec:theory}).
Finally, we present experimental results showcasing the effectiveness of \ABDshort{} compared to traditional approaches (\Cref{sec:experiments}).

\textbf{Contributions:}
We provide the first method for depth-limited solving which is able to take advantage of all mistakes of the opponent -- including those beyond the depth limit  -- while remaining robust against rational opponents.
The proposed method is also much easier to train than comparable existing methods:
    instead of training a neural network to predict the values \textit{for every possible combination of reach probabilities at the depth limit} (of which there are infinitely many) \cite{kovavrik2023value,moravvcik2017deepstack},
    it suffices to predict the values corresponding to a limited portfolio of strategies \cite{kovavrik2023value,brown2018depth,brown2019superhuman}.
Our experimental results in poker and battleship show that \ABDshort{} yields more than a twofold increase in utility when facing opponents who make mistakes beyond the depth limit and also delivers significant improvements in utility and safety against randomly generated opponents.

% The previous approaches show that using a single value function is enough in many games to improve the results significantly. However, the authors show an artificially created game in which the adaptation worsens the solution compared to just following the Nash equilibrium. A real game with similar problems is Battleships. Both players first place a set of ships on a grid, then alternate moves and try to hit the other player's ships. The first player who sinks all the ships of the other player wins. Suppose we are playing against a player who will only shoot a specific cell after shooting everything else. If we start, we can always win by placing a ship to occupy this one cell. However, when we use depth-limited solving and we do not have the opponent strategy incorporated in beyond the depth-limit the adaptation lowers significantly as our depth-limit decreases with no possible exploitation when we only see the opponent next move, which would be placing his ships as well. We show detailed results in the Experiments section.
\section{Background}\label{sec:background}
To model imperfect information in games, we adopt the standard extensive-form model.

% \dm{add notion of exploitability and maybe gain (or we can just use utility instead)}
    % \vk{We might just stick with utility, unless there is some strong reason not to.}
\begin{definition}[EFG]\label{def:EFG}
A two-player zero-sum \textbf{extensive form game} (EFG) $\game = \left< \actions, \histories, \playerFunction, \pi_c, \mc I, \utility \right>$ is a tuple for which:
\begin{itemize}
    \item $\actions = \actions_1 \cup \actions_2$ are the sets of \textbf{actions} of each player
    \item $\histories$ is the set of legal \textbf{histories} in $\game$.
        Formally, $\histories$ is a finite tree on $\actions$
            (i.e., it consists of sequences of elements of $\actions$ and is closed under initial segments).
        Leaves of $\histories$ are called \textbf{terminal states} and denoted $\leaf \in \leaves$.
    \item $\playerFunction : \histories \setminus \leaves \to \{1, 2, \chance \}$ determines which player acts at $\history$.
    \item $\actions(h) := \{ a\in \actions \mid ha \in \histories \}$ denote the set of actions that are legal at the given state.
     \item For $\playerFunction(\history)=\chance$, $\policy_\chance(\history) \in \Delta (\actions(\history))$ is the \textbf{chance} strategy at $\history$.\footnote{
            Where $\Delta(X)$ denotes the set of all probability distribution over a finite set $X$.
        }
     \item $u : \leaves \to \R^2$ is a zero-sum \textbf{utility} function
        s.t. $\utility_2 = - \utility_1$.
     \item $\mc I = (\mc I_1, \mc I_2)$ represent the \textbf{information sets} (infosets) of each player.
        Formally, each $\mc I_\pl$ is partition of $\histories$
        that\footnotemark{} conveys enough information to identify player $\pl$'s legal actions.
            \footnotetext{
                In other words, we assume -- for convenience -- that the notion of player's information is defined everywhere in the game, rather than only when the player is acting.
            }
        We assume that each player has perfect recall.
        % \footnote{
        %     That is, for every $I\in \mc I_i$, $p(h)$ is either equal to $i$ for all $h\in I$ or for no $h\in I$. If for all, then $\actions(h)$ doesn't depend on the choice of $h\in I$.
        % }
\end{itemize}
\end{definition}

\noindent
We will abbreviate ``player one'' (resp. two) as \Plone{} and \Pltwo{}.

Each player plays the game using a \textbf{behavioural strategy} $\policy_\pl$.
Formally, $\policy_\pl$ is a mapping 
    % \begin{align*}
    $
        \policy_\pl
        :
        \left\{ \history \in \histories \mid \playerFunction(\history) = \pl \right\}
        \mapsto
        \policy_\pl(\history) \in \Delta(\actions(\history))
    $
    % \end{align*}
    which determines how to randomise over legal actions at every information set. $\Sigma_i$ is the set of all strategies of player $i$.
A strategy $\policy_\pl$ is said to be \textbf{pure} when
    each of the distributions $\policy_\pl(\history)$ is deterministic.
Since each pair of strategies $(\policy_1, \policy_2)$ induces a probability distribution over the set $\leaves$ of leaves the EFG,
    we can define the \defword{expected utility} in $\game$ as
    $\utility_\pl(\policy)
        := \E_{\leaf \sim \policy} \utility_\pl(\leaf)
    $.
Informally, the goal of each player is to maximise their expected utility.

This work will focus on identifying a strategy which performs well against a specific known opponent.
This is formalised using the notion of \textbf{best response} to an opponent strategy $\policy_2$
    -- i.e., a strategy $\policy_1$ for which
    $
        \utility_1(\policy_1, \policy_2)
        =
        \max_{\policy'_1} \,
            \utility_\pl(\policy'_1, \policy_2)
    $.
Another important concept is that of a \textbf{Nash equilibrium} (NE)
    -- formally, a pair of strategies $(\policy_1, \policy_2)$ is a NE if each of the strategies is a best response to the other strategy.

In a two-player zero-sum game, the \textbf{exploitability} of a strategy is the expected utility a fully rational opponent can achieve above the value of the game. Formally, exploitability $\mathcal{E}(\sigma_i)$ of strategy $\policy_i$ is
$
    \mathcal{E}(\policy_i) =  u_{-i}(\policy_i, \policy_{-i}) - u_{-i}(\policy^{NE}), \quad \policy_{-i} \in BR_{-i}(\policy_i).
$

\textbf{Safety} is defined based on exploitability and $\epsilon$-safe strategy is a strategy which has exploitability at most $\epsilon$.

We define \textbf{gain} of a strategy against a model as the expected utility we receive above the value of the game. We formally define the gain $\mathcal{G}(\policy_i, \policy_{-i})$ of the strategy $\policy_i$ against a strategy $\policy_{-i}$ as
$
    \mathcal{G}(\policy_i, \policy_{-i}) = u_i(\policy_i, \policy_{-i}) - u_{i}(\policy^{NE}).
$

\subsection{Robust Adaptation: Restricted Nash Response}\label{sec:sub:RNR}

Consider a situation where
    we believe that with probability $p \in [0, 1]$, our opponent in a game $\game$ will use some fixed strategy $\fixedOppStrategy \in \policies_2$,
    but with the remaining $1-p$ probability, they will play rationally (given that they know about our beliefs).
This situation can be formalised as looking for a \textbf{restricted Nash equilibrium} \cite{johanson2008computing},
    which is defined as a pair of strategies $(\policy^*_1, \policy^*_2)$ such that $\policy^*_2$ is a best response to $\policy^*_1$ and $\policy^*_1$ is a best response to $p \cdot \fixedOppStrategy + (1-p) \cdot \policy^*_2$.
The $\policy^*_1$ part of this pair is called $p$-\textbf{restricted Nash response} to $\fixedOppStrategy$.
It can be obtained by finding the (standard) Nash equilibrium
    of what we will refer to as the \textbf{\RNRgame{}} $\RAG(\game, p, \fixedOppStrategy)$:
    Structurally, $\RAG(\game, p, \fixedOppStrategy)$ consists of a chance node and two copies of $\game$ (\Cref{fig:RNR_game}).
    The chance node serves as the root of the game and its outcome is only observable to player two.
    With probability $p$, the chance node leads to a copy of $\game$ where
        \Pltwo{}'s decisions are replaced by chance nodes which select actions according to $\fixedOppStrategy$.
    With probability $1-p$, the chance node leads to a copy of the game that works identically to the original game $\game$
        (except that \Plone{} does not know which copy got selected).

\begin{figure}[tb]
    \centering
    \includegraphics[width=\linewidth]{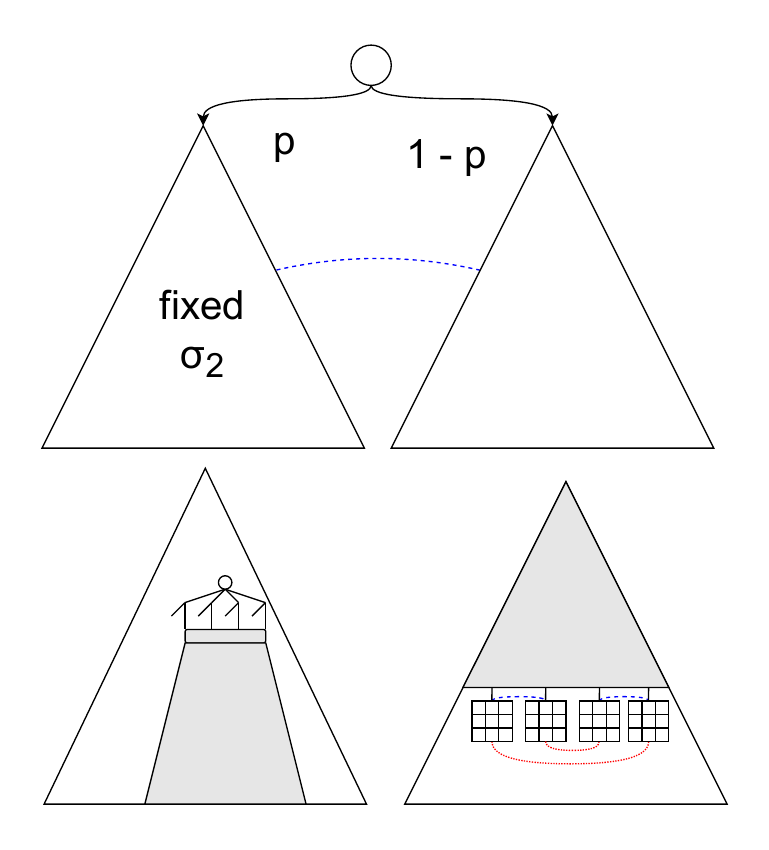}
    \caption{
        Top: Calculating restricted Nash response.
        Left: Starting search mid-play and a re-solving gadget.
        Right: Depth-limited solving using matrix-valued states.
    }
    \label{fig:background_DLS}\label{fig:RNR_game}
    \Description{Robust adaptation game, re-solving gadget, and depth-limited solving using matrix-valued states.}
\end{figure}

\subsection{Online Play in Large Games: Continual Resolving}\label{sec:sub:CR}

A common goal in studying large games is to find their Nash equilibrium, in the sense of being able to write down the whole strategy at once.
However, some games are too large for this to be tractable -- e.g., too large to even fit the full strategy into memory.
A more tractable alternative to this approach is \textbf{online play},
    which only aims to allow the user to \textit{play} the game according to some strategy.
    Formally, it aims to provide a mapping which
        takes an information set $\infoset \in \InfosetTree_1$ as an input
        and outputs a probability distribution $\policy^*_1 (\infoset) \in \Delta(\actions_1(\infoset))$
        for some $\policy^* \in \NE(\game)$ that is consistent across all $\infoset$.
    While such methods might require costly pre-training,
    the eventual computation of individual probability distributions $\policy^*_1(\infoset)$ must be efficient enough to work in a real time (e.g., not require more than a few seconds).

\textbf{Continual resolving} \cite{moravvcik2017deepstack,vsustr2019monte} is a method for online play in large imperfect-information games.
It consists of two key ingredients:
    The first is \textbf{re-solving}, which allows the algorithm to focus computation on the current position in the game,
        rather than having to start from the root each time.
    The second is \textbf{depth-limited solving}, which allows the algorithm to replace too-distant future events by an abstraction -- typically implemented as a value function.

\subsubsection{Restarting Search Mid-Game: Re-solving Gadget}\label{sec:sub:resolving}

The high-level idea behind re-solving is simply to replace the full game by a sub-game rooted at the current state.
While this is trivial in perfect-information games, extending this approach to imperfect information games requires some care.
First, to preserve the possibility of reasoning about uncertainty, the ``root'' of the game must include all histories from the current \textbf{public state} $\publicSet$ \cite{kovavrik2022rethinking}.
    (Where public states are defined as a partition of histories that is closed under the membership to both players' information sets.)
Second, the naive approach (called \textbf{unsafe re-solving})
    -- of replacing the initial part of the game by a single chance node,
    with the probability of each $\history \in \publicSet$ being proportional to its reach probability under some Nash equilibrium --
    unfortunately generally results in an exploitable strategy \cite{burch2014solving}.
    (Intuitively, this is because \Plone{} might make mistakes that can only be punished by taking non-equilibrium actions in the part of the game that already took place.)
This issue can be avoided by augmenting the root of the game with a \textbf{re-solving gadget} \cite{burch2014solving,moravcik2016refining}
    which re-introduces \Pltwo{}'s ability to punish \Plone{}'s suboptimal play by unobserved actions prior to the current state.
% We will use the approach proposed by \citet{burch2014solving}, which requires the knowledge of the strategy of \Plone{} and values of \Pltwo{} in histories $\history \in \publicSet$.
We will use the approach proposed by \citet{milec2021continual}, which only requires the knowledge of the strategy of \Plone{}.

\subsubsection{Depth-Limited Solving via Value Functions}\label{sec:sub:value_functions}

A common approach to depth-limited solving is to only consider a limited \textbf{look-ahead tree},
    in which every history $\history$ at the \textbf{depth limit} $d \in \N$ is turned into a terminal state.
The utilities of these new leaves are then given by some \textbf{optimal value function} $v$ which assumes that, beyond the depth limit, both players play according to some Nash equilibrium $\policy^*$.
In practice, the value function is typically implemented by a neural network trained to approximate the exact values \cite{moravvcik2017deepstack}.
As a complicating factor in imperfect-information games, the values of optimal values functions depend not only on the current history $\history$, but also on the strategy that the players used prior to reaching $\history$ \cite{kovavrik2023value}.
This makes such value functions more costly to both train and use during deployment.

% An important complication is that most games that are of practical interest are too large to be computationally tractable.
% For example, it might even be impossible to explicitly store a player's full strategy in memory.
% To overcome these difficulties, researchers sometimes use \textbf{depth-limited methods},
%     which reduce the computational complexity in games
%     by restricting the analysis to a limited portion of the game tree (up to some depth $d$).
%     Beyond this depth, the game is approximated by a value function $v$,
%         typically implemented by a neural network trained to approximate the exact values.
%
% One approach to depth-limited solving is to assume that beyond the depth limit, both players are going to play (approximately) optimally \cite{kovavrik2023value}.
%     However, the presence of imperfect information means that what constitutes an optimal play depends on the probability distribution over the (unobserved) events that already occurred.
%     This in turn means that when relying on an optimal value function,
%         even the computation of a best response becomes a non-trivial matter,
%         requiring multiple passes of the depth-limited game tree.

\subsection{Depth-Limited Solving via Matrix-Valued States}\label{sec:sub:MVS}

An alternative approach to depth limited solving
    (first described by \citet{brown2018depth}, used by \citet{brown2019superhuman}, and further studied by \citet{kovavrik2023value})
    relies on a construction that resembles the normal-form representation of EFGs.
The idea behind the classical normal-form representation of an EFG is that instead of acting sequentially, the players specify their full strategy ahead of time \cite{MAS}.
However, one can also consider a depth-limited analogy of this concept,
    where the agents act sequentially for the first $d$ steps of the game,
    and then only specify their full strategy for the \textit{remainder} of the game.
In the idealised version of this approach,
    each player would be able to chose from \textit{all} of their pure strategies.
In practice, this would be intractable,
    so each player is only allowed to chose from a more limited \textbf{portfolio} $\portfolio_\pl \subset \policies_\pl$.
% Since this method can be implemented by replacing states at the depth-limited by simultaneous decisions -- i.e., matrices --
%     we refer to it as \textbf{matrix-valued states}.
% The term \textbf{multi-valued states} \cite{brown2018depth} refers to the important special case
    % where the portfolio of one of the players consists of only a single strategy.
Depending on the quality (i.e., representativeness) of the portfolios, this method can recover an exact -- or at least approximate -- NE of the game \cite{brown2018depth, kovavrik2023value}.

\begin{definition}[Matrix-valued depth-limited game]
Let
        $\game$ be a two-player EFG,
        $d \in \N$
        and $\portfolio_1 \subset \policies_1$, $\portfolio_2 \subset \policies_2$.
    The corresponding \textbf{matrix-valued depth-limited game}
        $\game_d(\portfolio_1, \portfolio_2)$
    is defined as the EFG that works as follows:
    For histories of length $< d$, $\game_d(\portfolio_1, \portfolio_2)$ works identically to $\game$.
    For histories of length $d$, the information sets of each player are the same as information sets in $\game$.
    However, unlike in $\game$,
        both players are active in these histories
        and their legal actions correspond to selecting $\pStrategy_1 \in \portfolio_1$, resp. $\pStrategy_2 \in \portfolio_2$.
    This yields payoffs
        $
            \utility^{\game_d(\portfolio_1, \portfolio_2)}_\pl (\history, \pStrategy_1, \pStrategy_2)
            :=
            \E [ \utility_\pl(\leaf) \mid \leaf \sim(\pStrategy_1, \pStrategy_2), \leaf \textnormal{ extends } \history ]
        $
        and terminates the game.
\end{definition}

\noindent
Using this terminology,
    the existing methods \cite{brown2018depth, brown2019superhuman,kovavrik2023value} can be understood as finding an \NE{} of particular games $\game_d(\portfolio_1, \portfolio_2)$.
The name ``\textbf{matrix-valued states}''
    is a reference to the multi-valued states used by \citet{brown2018depth}
    and comes from the fact that in $\game_d(\portfolio_1, \portfolio_2)$,
    the histories at depth $d$ can be viewed as matrices of size $|\portfolio_1| \times | \portfolio_2|$. 

% However, this does not allow us to take advantage of the knowledge of the opponent's strategy $\fixedOppStrategy$.
% This is partially remedied in the \CDRNR{} algorithm \cite{milec2021continual},
%     which assumes that the opponent uses $\fixedOppStrategy$ before the depth limit in the fixed part of the RNR tree.
%     However, \CDRNR{} still uses the optimal value function, which is analogous to assuming that $\portfolio_\pl = \policies_\pl$
%     -- i.e., that beyond the depth limit, the players can choose from among all legal strategies.

\section{Related Work}\label{sec:related_work}

Although opponent modeling in games has been extensively studied \cite{nashed2022survey}, this paper takes a different approach. We assume that an opponent model is already available and focus instead on developing strategies for safe adaptation against this model.
We discuss three lines of related work:
    adapting in games that are sufficiently small to disregard computational issues,
    adaptation in larger but perfect-information games,
    and adaptation in large games with imperfect information.
Notably, all of works cited in this section either solve the full game
    -- which is impossible for the larger games --
    or rely on depth-limited methods which do not allow for incorporating information about the opponent model.
In \Cref{sec:sub:CDRNR}, we describe CDRNR, the state of the art algorithm used as a baseline in our empirical evaluation.

\subsection{Adapting to Opponents in Smaller Games}

A key feature of smaller games is that we can always compute an exact best response.
Consequently, there are many works which focus on on creating the opponent model and then adapting to it by shifting to the best response \cite{carmel1996learning,billings2004game,southey2012bayes,ganzfried2018bayesian}.

% A notion that puts emphasis on safety is the restricted Nash response.
There is also a number of other approaches.
\citet{johanson2008computing} show that by using restricted Nash response against a fixed strategy $\sigma^F$, they can reconstruct a Pareto optimal set with respect to \gain{} and \exploitability{}.
\citet{milec2021continual} also show an explicit bound on exploitability as a function of the parameter $p$.
Another approach that is is similar to restricted Nash response is to focus on using limited data to exploit data-biased response \cite{johanson2009data}. Data-biased responses use a parameter for each the opponent's decisions and can differ depending on our confidence in the data.
Finally, \citet{ganzfried2015safe} focus on safe adaptation in repeated game,
    where they show that one can safely improve their utility by tracking the profit gained from opponent mistakes and risking at most that amount.

\subsection{Adapting to Opponents in Perfect Information Games}

Early work on adaptation in perfect-information games started by creating simple opponent models returning values for actions and using search to adapt using those values \cite{carmel1994m}.
Later works focused on analyzing moves that are still part of the optimal strategy but adapt to the opponent by winning faster against moves that might be losing against a perfect player, hinting at the notion of safety \cite{sen1997learning}.
This line of work concluded with the authors of \cite{markovitch2005learning} creating a method that not only computes where one can adapt to the opponent but also keeps a self-weakness model and tries to search for actions where the opponent is weak, but the method will be strong.

\subsection{Adaptation in Large Imperfect Information Games}

In large games with imperfect information, \citet{timbers2020approximate} propose an optimised reinforcement learning that allows us to compute an exact response.
% In large games with imperfect information, we can compute an exact response using reinforcement learning, and authors in \cite{timbers2020approximate} propose an optimized reinforcement learning method exactly for that.
However, this approach requires a significant investment time for each opponent, making it inapplicable when we need to respond to a new opponent in real time.
\citet{wu2021l2e} propose a method that can adapt online by evolving a neural network that can quickly adapt during play.
However, this method does not provide any safety guarantees.
Another recent contribution is the algorithm Cicero, developed for the large multi-player game Diplomacy, which involves adapting to the assumed intent of the opponents as its crucial component \cite{meta2022human}.
However, because most actions (i.e., every except communication) in Diplomacy are public, the direct applicability of this algorithm to general imperfect-information games is limited.

\subsubsection{Safe Online Adaptation}

An important next step in adaptation to opponents is the ability to adapt in online play while having safety guarantees.
Several works tackle this issue by combining restricted Nash response with continual re-solving \cite{moravvcik2017deepstack}, with main differences being in the choice of the re-solving gadget (cf. \Cref{sec:sub:resolving}) \cite{liu2022safe,milec2021continual,ge2024safe}. 
\citet{liu2022safe} use the max-margin gadget and only use the opponent model as it reaches the subgame.
\citet{ge2024safe} use the re-solving gadget of \citet{burch2014solving}.
Since neither of these works uses the opponent strategy in the re-solving subgame,
    they have strong safety guarantees but have a lower ability to adapt to the opponent.

\subsubsection{Continual Depth-limited Responses}\label{sec:sub:CDRNR}
The current state-of-the-art approach for depth-limited opponent-robust exploitation is the \textbf{continual depth-limited restricted Nash response (CDRNR)} \cite{milec2021continual}
    and its special case \textbf{continual depth-limited best response (CDBR)}
    (which corresponds to setting CDRNR's ``robustness'' to zero).
On the high level,
    CDRNR combines a restricted Nash response with depth-limited solving
    by using a neural network to approximate the optimal play after the depth limit
    (i.e., disregarding the opponent model there).
The authors highlight limitations of existing re-solving gadgets, demonstrating why restricted Nash response guarantees fail with them.
To address this, they introduce a \textbf{full gadget} that preserves the path from the root to the current subgame, fixing the resolving player and allowing opponent deviations.

% In multiplayer games, the opponent adaptation is crucial for strong performance.
% This was well illustrated by the recent results in the game Diplomacy where adapting to the assumed intent of the opponents is an essential part of the algorithm \cite{meta2022human}.
% In diplomacy, all the moves are public, so there is no need to adapt to what the opponent would do in the later turn. However, adaptation will be crucial in games where more information is hidden.
\section{Motivation}\label{sec:motivation}
When playing against a subrational opponent, we can follow the optimal strategy and limit our \gain{} by a lot. We can adapt to some opponent mistakes that do not force us to deviate from the optimal strategy, and finally, we can adapt to opponent mistakes in cases where we need to deviate from the optimal strategy.

The concepts smoothly overlap in imperfect-information games, and it is a scale of how much we want to adapt and how much we would leave ourselves to be exploited. In perfect information games without chance, it is discrete, and we show an example of a tic-tac-toe to explain the concept.

\begin{example}[Tic-Tac-Toe]
Figure~\ref{fig:tick-tac-toe} shows different situations in tic-tac-toe. Assume that the optimal strategy player 1 is playing always starts in the middle. In the left column, we see what happens when he starts in the middle. The opponent reacts by playing in the corner, and then we will have a game where he reacts optimally, ending in a draw. However, if player 1 starts in the corner, the opponent would follow as is shown in the middle column of Figure~\ref{fig:tick-tac-toe} and would play a suboptimal move, resulting in his loss (player 1 will win by playing any corner, forcing block and following with the other corner setting up two possible threes and the opponent can only block one). This deviation for player 1 is still part of another optimal strategy, and if played optimally by both players, the game would end in a draw. Finally, in the right column of Figure~\ref{fig:tick-tac-toe}, we have a situation where the opponent starts and plays in the middle. We will have a draw if player 1 plays the optimal move in any corner. However, if player 1 plays a move that would be losing against the optimal player, the opponent will malfunction and lose. Knowing that player 1 can win when surely facing the opponent but can be severely punished when facing anyone else.

\ABDshort{} has a parameter $p$ which governs how much we want to adapt to the opponent (corresponding to the probability $p$ in RNR). In this case, we have three options: the optimal line (\optline{}), the safe adapting line (\adaptingline{}), and the risky adapting line (\riskyline{}). When $p = 0$, player 1 will just reconstruct his optimal strategy, which in our case would be \optline{}. As soon as $p > 0$, player 1 would switch to \adaptingline{} but would still not play the \riskyline{}. At $p > 0.5$, player 1 would switch from \optline{} to \riskyline{} but would still prefer \adaptingline{} over \riskyline{} if the two were in the same decision. Finally, when we set $p = 1$, player 1 will no longer distinguish between \adaptingline{} and \riskyline{}.
\end{example}

We consider the setting,
    previously explored by \cite{milec2021continual,liu2022safe,ge2024safe},
    where the goal is to
        play (approximately) optimally against a fixed opponent $\fixedOppStrategy$,
        % compute the (approximate) best strategy against a fixed opponent,
        given a model that predicts their play. At the same time, we assume the model can be inaccurate, and we need to limit our exploitability.

In smaller games, this problem can be solved exactly by finding a restricted Nash response, which can be parametrized to recover the Pareto set of epsilon-safe best responses.

The leading method, presented by \cite{milec2021continual},
    is based on depth-limited solving
    and assumes that both players behave rationally beyond the depth limit.
    % uses depth-limited solving,
    %     building the game tree only a few moves ahead.
% At the depth limit, a value function assumes both players behave rationally for the rest of the game.
This approach, however, has a critical flaw:
    it cannot take advantage of the opponent's mistakes that happen after the depth limit.
For illustration, consider the following example:

\begin{example}[Simplified Battleships]\label{expl:simplified_battleships}
Take a simplified 2x2 version of Battleships with a single 1x1 ship as shown in Figure~\ref{fig:battleships_example}.
The opponent acts second and fires uniformly, except for the top-left corner, which they always shoot last.
To exploit this, it suffices to place our ship in the top-left, allowing us to consistently win the game.
However,
    suppose we can only look one move ahead in our tree
    and that we assume that the opponent will be shooting optimally.
    Then, we will only have the actions used to place ships in our depth-limited tree for both players. Therefore, we will be unable to exploit the opponent.
\end{example}

\begin{figure}
    \centering
    \includegraphics[width=0.6\linewidth]{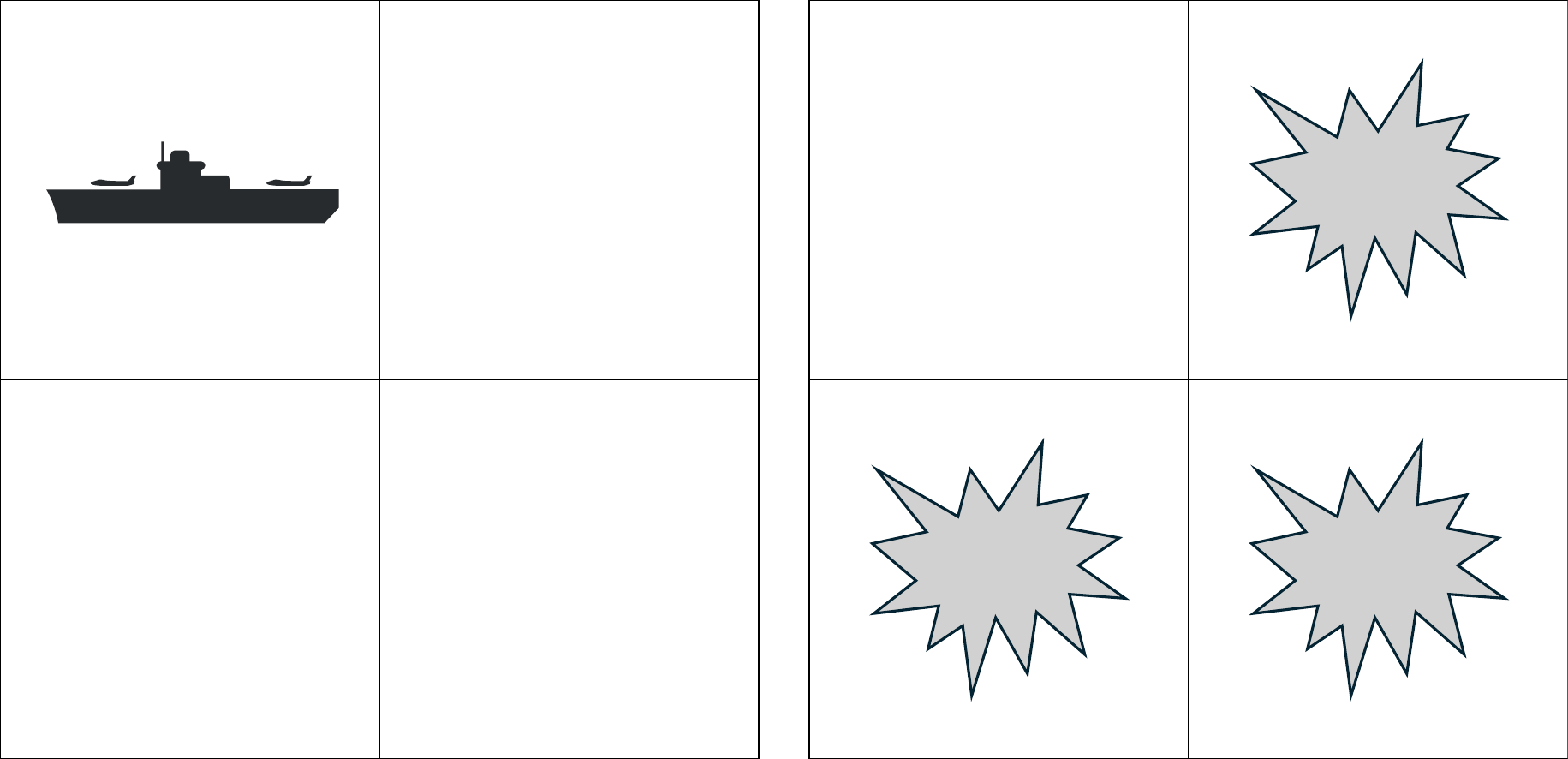}
    \caption{Setup for \Cref{expl:simplified_battleships}. Battleships of size 2x2 with one 1x1 ship. Fixed opponent will not fire to the top left corner, and our best response is to place the ship there.}
    \label{fig:battleships_example}
\end{figure}

% \section{Looking Beyond the Depth-limit}
\section{Opponent Exploitation Beyond the Depth-limit}\label{sec:theory}

In this section, we describe an online game-playing method, \emph{\ABD{}} (\ABDshort{}), that is able to strike a balance between playing well against a specific opponent and remaining unexploitable.
Unlike previous methods that rely on depth-limited solving, \ABDshort{} will be able to take advantage of all of the opponent's mistakes, irrespective of whether they occur before or beyond the depth limit.
In \Cref{sec:sub:algorithm_ideal_case}, we describe the idealised version of the method that assumes access to a value function that captures the behaviour of the specific opponent.
In \Cref{sec:experiments}, we describe two practical methods for approximating the idealised value function described in \Cref{sec:sub:algorithm_ideal_case}.
We then conclude in \Cref{sec:sub:br} by showing how the algorithm can be simplified when we are certain about the opponent's identity and do not need to worry about remaining unexploitable.

\subsection{Opponent Exploitation in Online Play via Matrix-Valued States}\label{sec:sub:algorithm_ideal_case}

If we disregard efficiency, the \textit{idealised} version of the \ABD{} algorithm can be described as follows:
\begin{itemize}
    \item Let $\game$ be a game and $\emptyset \neq \portfolio_\pl \subset \policies_\pl$ portfolios for each player.
    \item The parameters of the algorithm are
        a depth limit $d \in \N$,
        a fixed opponent strategy $\fixedOppStrategy \in \policies_2$,
        and $p \in [0, 1]$ which controls the adaptation-exploitability tradeoff.
    \item The input to the \ABDshort{} algorithm is an infoset $\infoset \in \InfosetTree_1$.
    \item The output, obtained in a way described below, is a probability distribution $\policy_1(\infoset) \in \Delta(\actions_1(\infoset))$.
    \item The algorithm considers the game $\game^3$, which can be obtained as follows:
    \begin{enumerate}
        \item Let $\game^1$ be the \RNRgame{} corresponding to $\game$, $\fixedOppStrategy$, and $p$.
        \item Let $\game^2$ be the re-solving subgame \cite{milec2021continual} corresponding to \Plone{} starting from $\infoset$ in $\game^1$.\footnotemark{}
        \item Let $\game^3 := \game^2_d(\portfolio_1, \portfolio_2)$ be the matrix-valued depth-limited variant of $\game^2$,
            corresponding to looking $d$ steps ahead from $\infoset$, after which each player is limited to only using strategies from their portfolio $\portfolio_\pl$.
            (Note that, as in $\game^1$ and $\game^2$, \Pltwo{} is still limited to using $\fixedOppStrategy$ in the half of the game tree where their strategy is meant to be fixed.)
    \end{enumerate}
    \item To compute $\policy_1(\infoset)$, the idealised algorithm finds a Nash equilibrium $\policy^*$ of $\game^3$ and sets $\policy_1(\infoset) := \policy^*_1(\infoset)$.
\end{itemize}
    \footnotetext{
        Strictly speaking, building the resolving gadget requires additional information about the game
        -- namely, the strategy of \Plone{} in the preceding part of the game \cite{milec2021continual}.
        When using the \ABDshort{} algorithm for online play, this strategy is always available from the previous call of the algorithm.
    }

This algorithm is idealised in three ways:
First, it assumes we have access to a method that can compute an exact Nash equilibrium of $\game^3$.
    This can, in theory, be achieved by methods such as linear programming \cite{koller1992complexity}.
    However, this would often be impractically slow in large games.
    As a result, a practical implementation of \ABDshort{} will use an approximate iterative algorithm such as counterfactual regret minimisation \cite{zinkevich2008regret}.

Second, when constructing the depth-limited version of $\game^2$,
    the idealised algorithm does not take advantage of the specific structure of the game.
    This is not an issue in the part of the game where \Pltwo{} can use any of the strategies from their portfolio
        (since the corresponding values
            $\E [ \utility_\pl(\leaf) \mid \leaf \sim(\pStrategy_1, \pStrategy_2), \leaf \textnormal{ extends } \history ]$,
            $\pStrategy_\pl \in \portfolio_\pl$,
        can be pre-computed without the knowledge of $\fixedOppStrategy$).
    However, in the part of the game where \Pltwo{} must use their fixed strategy $\fixedOppStrategy$, the construction requires the knowledge of
        $\E [ \utility_\pl(\leaf) \mid \leaf \sim(\pStrategy_1, \fixedOppStrategy), \leaf \textnormal{ extends } \history ]$.
    In this section, we will (unrealistically) stick with the assumption that this value can be calculated exactly every time it is needed.
    A more realistic approach to this obstacle will be discussed in \Cref{sec:sub:algorithm_practical}.

Finally, the idealised algorithm also does not shed any light on the shape of the re-solving game $\game^2$.
    This could pose a problem if it turned out that the shape of $\game^2$ (and thus also $\game^3$) is very sensitive to the choice of $\fixedOppStrategy$ and $p$, requiring a separate construction each time.
    Fortunately, this is not the case:
    We will now show that by relying on the specific structure of the \RNRgame{} $\game^1$,
    we can derive a general formula which can be used to define $\game^2$.

The gadget at the top of $\game^3$ closely follows the construction described by \cite{milec2021continual}. Figure~\ref{fig:abd_full_example} 
illustrates the gadget’s structure. In the subgame where the opponent is fixed, we introduce an initial chance node, which sets the initial reaches at the root of the subgame to the combined reaches of all players, including chance.

In the subgame where the opponent plays rationally, we use the full gadget proposed by \cite{milec2021continual}. The full gadget preserves the path from the root of the game to the subgame. A value function replaces the trajectories leading outside of the subgame. The authors in \cite{milec2021continual} used poker-specific value functions. Since these value functions could only be queried after each round, the authors needed to keep the whole previous poker round as the gadget. We avoid this limitation by constructing \MaVS{} for the first public state of each alternative trajectory, significantly reducing the gadget's size and removing the need to evaluate the neural network representing the value function in each round.

\begin{restatable}{proposition}{propIdealCase}\label{prop:ideal_case}
    Let
        $\game$ be a two-player zero-sum EFG,
        $\fixedOppStrategy \in \policies_2$ be a fixed strategy of the opponent,
        and $p \in [0, 1]$.
    When
        $
            \portfolio_\pl
            \supseteq
            \left\{
                \pureStrategy_\pl \in \policies_\pl
                \mid
                \pureStrategy_\pl \textnormal{ is pure undominated}
            \right\}
        $,
    the strategy produced by the \ABDshort{} algorithm is
        % a Nash equilibrium of the \RNRgame{} corresponding to $\game$, $\fixedOppStrategy$, and $p$.
        a $p$-restricted Nash response to $\fixedOppStrategy$ in $\game$.
\end{restatable}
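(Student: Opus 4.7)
The plan is to show that the Nash equilibrium $\policy^*$ of $\game^3$ computed by \ABDshort{} agrees, at the infoset $\infoset$, with the \Plone{} component of some Nash equilibrium of $\game^1 = \RAG(\game, p, \fixedOppStrategy)$, which by \Cref{sec:sub:RNR} is precisely a $p$-restricted Nash response to $\fixedOppStrategy$. I would prove this by chasing the chain of reductions $\game^1 \to \game^2 \to \game^3$ and verifying that each step preserves the \Plone{} strategy at $\infoset$.

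\textbf{Step 1 (re-solving reduction, $\game^1 \to \game^2$).} Here I would directly invoke the correctness of the full re-solving gadget of \citet{milec2021continual}, which was built for exactly this setting. That result guarantees that for any NE $\tilde\policy$ of the re-solving subgame $\game^2$ rooted at the public state containing $\infoset$, there exists a NE $\policy^\dagger$ of the \RNRgame{} $\game^1$ that agrees with $\tilde\policy$ on all \Plone{} infosets inside the subgame (in particular at $\infoset$). It therefore suffices to show that the output of \ABDshort{} coincides, at $\infoset$, with the \Plone{} component of some NE of $\game^2$.

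\textbf{Step 2 (matrix-valued reduction, $\game^2 \to \game^3$).} For this step I would appeal to the correctness of matrix-valued depth-limited games of \citet{brown2018depth} and \citet{kovavrik2023value}. The key observation is that in a two-player zero-sum EFG, for any opponent reach distribution at a depth-$d$ history, each player admits a pure best response, and such a best response may be taken to be undominated: if a pure BR were dominated, the dominating strategy would also be a BR. Consequently, when $\portfolio_\pl$ contains all pure undominated strategies, the set of achievable continuation value vectors at every depth-$d$ history of $\game^3$ matches that of $\game^2$. Standard minimax arguments then imply that every NE of $\game^3$ projects, above the depth limit, to a NE of $\game^2$, which closes the chain at $\infoset$.

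\textbf{Main obstacle.} I expect the delicate part to be handling the asymmetry of $\game^3$ inherited from the \RNRgame{}: in the ``fixed'' half of the tree, \Pltwo{}'s actions below the depth limit are pinned down by $\fixedOppStrategy$, so the matrix-valued construction is effectively one-sided and the payoffs become $\E[\utility_\pl(\leaf) \mid \leaf \sim (\pStrategy_1, \fixedOppStrategy)]$. I would need to verify that the richness argument in Step 2 still applies there, which it does: since \Pltwo{} has no choice, only \Plone{}'s portfolio needs to cover her pure undominated continuations, and the stated assumption handles this. In the ``rational'' half, both portfolios are required to be rich, and again the same assumption suffices. Stitching Steps 1 and 2 together then yields that the distribution produced by \ABDshort{} at $\infoset$ is $\policy^\dagger_1(\infoset)$ for some NE $\policy^\dagger$ of $\game^1$, establishing the proposition.
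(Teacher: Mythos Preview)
Your proposal is correct and follows essentially the same approach as the paper: the paper's proof invokes exactly the same two external results (Theorem~4 of \citet{kovavrik2023value} for the matrix-valued reduction and Theorem~5.3 of \citet{milec2021continual} for the full-gadget re-solving step) in the same logical order, just more tersely. Your additional discussion of the asymmetry in the ``fixed'' half of the tree is a useful elaboration that the paper leaves implicit.
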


\begin{proof}
    From Theorem 4 in \cite{kovavrik2023value}, it follows that when the portfolios $\portfolio_\pl$ contain at least all pure undominated strategies, the algorithm coincides with Continual Resolving (applied to the \RNRgame{}) using full gadget of \citet{milec2021continual}.
    By Theorem 5.3 in \cite{milec2021continual}, we will converge to restricted Nash response when we use the full gadget.
\end{proof}

% \propIdealCase*       % uncomment this to repeat the statement (to be placed in appendix)

% EXPLAIN(\SADshort{} algorithm assumes the opponent uses  $\fixedOppStrategy$ both before and after the depth limit. We construct a RNR game and set a depth limit the same way \CDRNR{} does. Then when we reach the depth limit in the fixed game $G^F$ we set the continuations as values of fixed opponent strategy against the portfolio of the player. In the free game $G'$ we create a continuation where the first player chooses from it's portfolio and then the second one chooses, resulting in a value given by continuing in the game using the respective portfolios.

% When we move in the game deeper we reconstruct subgame from a public state and use the same approach as \CDRNR{} of creating a gadget above the $G'$ and using unsafe resolving for $G^F$. This method used with all pure strategy continuation as portfolio allows us to recover the RNR.

% \begin{proposition}[With full portfolio, \ABDshort{} finds an exact best response]
%     \label{prop:full_pl_one_portfolio}
% Let $\game$ be a two-player zero-sum game
%     and let $\fixedOppStrategy \in \policies_2$ be a fixed strategy of the opponent.
% When $\portfolio_1 = \{ \pureStrategy_1 \in \policies_1 \mid \pureStrategy_1 \textnormal{ is pure} \}$,
%     the best-response to the pre-depth limit portion of $\fixedOppStrategy$ in $\game_d(\portfolio_1, \{ \fixedOppStrategy\} )$
%     is a best-response to $\fixedOppStrategy$ in $\game$.
% \end{proposition}

\begin{figure*}
    \centering
    \includegraphics[width=\linewidth]{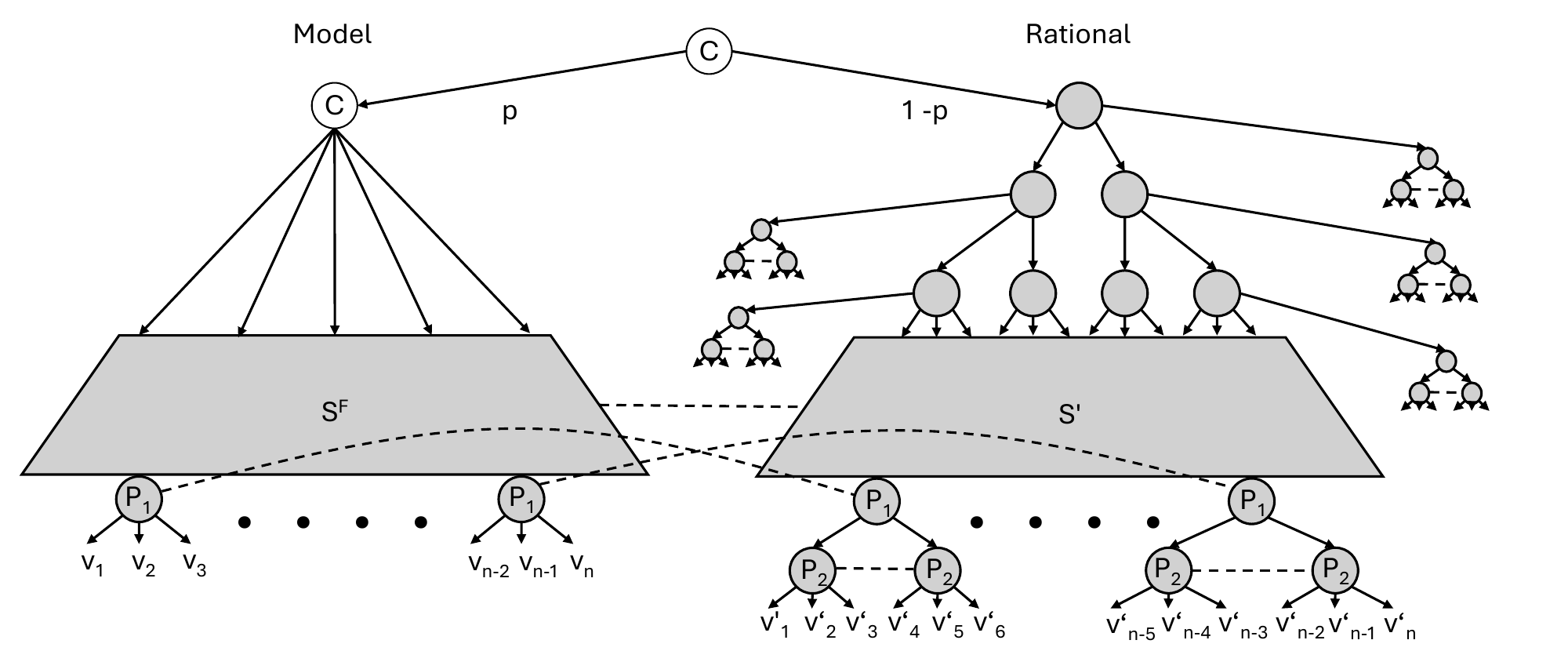}
    \caption{
        An illustration of the game used by the \ABD{} algorithm.
        The upper part is the gadget, whose presence is necessary when not running the algorithm from the root.
        Left: fixed opponent.
        Right: rational opponent.
    }
    \label{fig:abd_full_example}
    \Description{The game used by the \ABD{} algorithm.}
\end{figure*}

\subsection{Computing Opponent-Specific Values}\label{sec:sub:algorithm_practical}

% In larger games, the idealised form of the \SADshort{} algorithm would be intractable for two reasons.
% First, the number of all pure-strategy continuations would be intractable.
%     (For example, in a mid-sized Battleships game with 5x5 board and two 2x2 ships and depth limit $d=1$, there are $10^{25}$ continuations.)
% The solution to this problem is to use a limited portfolio of different strategies.
%     (For example in battleships we can use shooting even spaces and shooting odd spaces and then some ship sinking continuation.)
%     Authors in \cite{kubicek2023look} propose a method to learn such portfolios during training.

In larger games, the idealised form of the \SADshort{} algorithm would be intractable for two reasons.
First, if we wanted to rely on \Cref{prop:ideal_case}, we would need to use a portfolio which consists of all pure undominated strategies in $\game$.
    In practice, many of the strategies are redundant
    -- when using a portfolio strategy to evaluate a particular state, the only relevant part of the strategy is its restriction to the plausible future states (not its past actions or actions in states that are no longer possible).
    While this greatly reduces the size of the number of strategies,
        even the reduced portfolios would be intractably large.
    (For example, in a mid-sized Battleships game with a 5x5 board, two 2x1 ships, and depth limit $d=1$, the portfolios would contain roughly $10^{25}$ strategies.)
The solution to this problem is to use a limited portfolio of different strategies
    (some of which might not necessarily be pure).
\citet{kubicek2023look} propose a method to learn such portfolios during training.

% Second, even with a manageably-sized portfolio $\portfolio_1$,
%     obtaining the exact values corresponding to $(\pStrategy_1, \fixedOppStrategy)$, $\pStrategy_1 \in \portfolio_1$, could be prohibitively expensive.
% We solve this problem by sampling from $(\pStrategy_1, \fixedOppStrategy)$ to approximate the correct values.

Second, even with a manageably-sized portfolio $\portfolio_1$,
    the future part of the game tree might be large,
    making it prohibitively expensive to obtain the exact values of $(\pStrategy_1, \fixedOppStrategy)$ and $(\pStrategy_1, \pStrategy_2)$, $\pStrategy_\pl \in \portfolio_\pl$.
For $(\pStrategy_1, \pStrategy_2)$,
    these values might be obtained prior to learning $\fixedOppStrategy$
    or even come as a side-product of constructing $\portfolio_\pl$.
However, $(\pStrategy_1, \fixedOppStrategy)$ cannot be pre-computed in such matter.
As a baseline variant, we thus \emph{compute the values
    $
        \E [ \utility_\pl(\leaf) \mid \leaf \sim(\pStrategy_1, \fixedOppStrategy), \, \leaf \textnormal{ extends } \history ]
    $
    by sampling trajectories $\leaf$ from using $\fixedOppStrategy$ (and $\pStrategy_1 \in \portfolio_1$)}.
    More specifically, we will additionally parametrize the algorithm by some $\nOfSamples$ and replace each of the values above with the corresponding empirical mean.

Our approach significantly reduces computational costs compared to the previously used neural network-based value functions. Instead of repeatedly querying a neural network during each iteration of the solving process, we sample the values only at the start of the solving phase and then apply the solving algorithm on the well-defined extensive form game. In contrast, neural network methods must invoke the value function at each iteration for depth-limited leaves, which can become computationally expensive, especially with large network architectures.

\subsection{Special Case: Efficient Best-Response Computation}\label{sec:sub:br}

When we are \emph{certain} that the opponent will play the fixed strategy $\fixedOppStrategy$ or we want to evaluate the exploitability of $\fixedOppStrategy$ we will have the $p=1$.\footnotemark{}
    \footnotetext{
        I.E., compared to CDBR from \cite{milec2021continual}, the value function is no longer changing.
    }
This allows us to simplify the computation in two ways:

% EXPLAIN(When we set the algorithm to fully exploit we can disregard the part where the opponent can act, which in turn creates a game where the opponent is fixed everywhere (compared to CDBR from \cite{milec2021continual} the value function is no longer changing) and we can simplify the computation.)

\paragraph{Observation 1: finding a best response in a single pass.}
When the opponent is fixed both before and beyond the depth limit, we obtain an EFG where one player is fixed everywhere.
This means that we no longer need to rely on iterative approaches to game solving
    (e.g., the \CFR{} algorithm \cite{zinkevich2008regret} used in \CDBR{} (cf. \Cref{sec:sub:CDRNR}) necessary to accommodate the value function represented by a neural network).
Instead, we are able to compute a best response using backward induction,
    which only requires a single pass of the look-ahead tree
    and thereby saves significant computation time.

\paragraph{Observation 2: searching a smaller part of the look-ahead tree.}
When we perform depth-limited search later in the game,
    the presence of imperfect information requires us to consider more states than just the (unknown) true state of the game.
In particular, approaches such as \CDBR{} must start the search from all histories in the current \textit{public state} \cite{kovavrik2022rethinking}.
In contrast, the use of a fixed opponent even after the depth-limit allows \ABDshort{} to only start the search from the histories in the current \textit{information set} of player one.
(E.G., in poker, this would mean the difference between considering all possible private hands for both players vs only considering all possible hands for the opponent.)

\begin{figure}
    \begin{minipage}{0.5\linewidth}
        \includegraphics[width=0.31\linewidth]{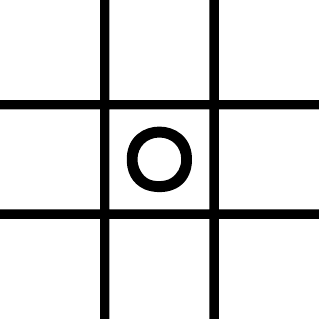}
        \hfill
        \includegraphics[width=0.31\linewidth]{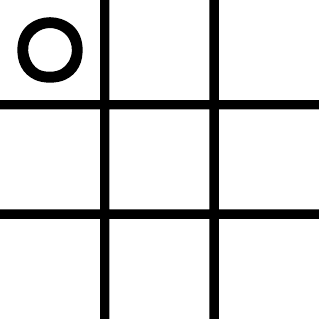}
        \hfill
        \includegraphics[width=0.31\linewidth]{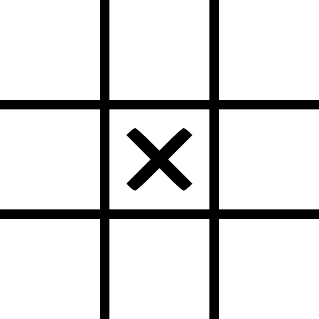}
    \end{minipage}

    \vspace{5pt}
    
    \begin{minipage}{0.5\linewidth}
        \includegraphics[width=0.31\linewidth]{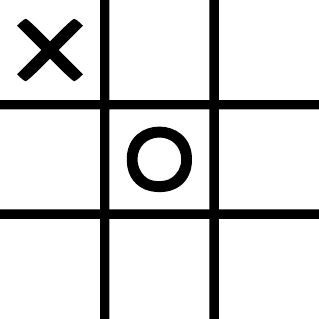}
        \hfill
        \includegraphics[width=0.31\linewidth]{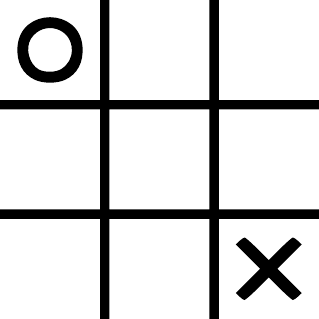}
        \hfill
        \includegraphics[width=0.31\linewidth]{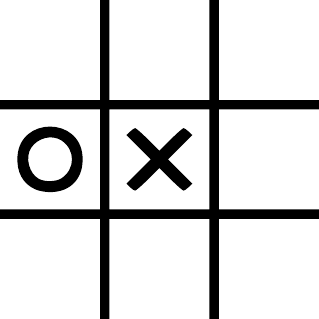}
    \end{minipage}    
    \caption{Example situations in Tic-tac-toe. The adapting player is the circle. Left: Opening with (one of) optimal strategies and reaction with the only optimal move. Middle: Opening with another optimal strategy and reaction, which is losing. Right: Opening by the opponent optimally and the suboptimal reaction of adapting player, knowing the opponent will then lose.}
    \label{fig:tick-tac-toe}
    \Description{Tic-tac-toe.}
\end{figure}
\section{Experiments}\label{sec:experiments}
We conduct experiments in both small and large imperfect information games to highlight the failure modes of CDBR (\Cref{sec:sub:CDRNR}) and demonstrate the adaptability of our method against subrational strategies.

\subsection{Failure of Previous Methods}
The first experiment aims to illustrate the failure modes of the previous method, specifically CDRNR (CDBR). We utilize a 2x2 Battleships game featuring one 1x1 ship, as outlined in Example 1. The opponent adopts a uniform shooting strategy, except for the top-left corner, which they target only as a last move. In Table~\ref{tab:winrate_depth}, we present the expected value of CDBR at increasing depths. With a game value of 0.25, it is evident that CDBR fails to exploit the opponent’s shooting pattern when only the first move of ship placement is observed. When CDBR observes the first shooting move, it is still not enough to shift its strategy, and only after observing two out of three strategically significant moves can it find the best response.

In contrast, our method consistently achieves victory in this setup by employing a portfolio of four strategies, each avoiding a different cell while shooting uniformly otherwise. We compute the values for the portfolios exactly in this experiment.

\begin{table}[htbp]
    \centering    
    \begin{tabular}{|c|c|c|c|c|c|}
        \hline
        \textbf{Depth}   & 1    & 2     & 3    & 4    \\ \hline
        \textbf{CDBR} & 0.25 & 0.25 & 1.00 & 1.00 \\ \hline
        \textbf{\ABDshort{} ($p = 1$)} & 1.00 & 1.00 & 1.00 & 1.00 \\ \hline
    \end{tabular}
    \caption{Winrate at different depths on 2x2 battleships with one 1x1 ship against an opponent who always shoots the upper left corner last. Depth is the number of future opponents' moves contained in the depth-limited subgame.}
    \label{tab:winrate_depth}
\end{table}

\subsection{Varying $p$ in Battleships}
The next experiment explores the performance of \ABDshort{} with a fixed depth of 2, varying the parameter p to examine its impact on the trade-off between exploitability and gain. We use the same 2x2 Battleships game described in Example 1 with the same portfolio, and we compute the value exactly again. CDRNR uses an optimal value function computed by a linear program. Unlike the previous method, which struggles to exploit this opponent, we show the tradeoff between \exploitability{} and \gain{} in Figure~\ref{fig:battleships_abd_results}. We report the robust results this way since it is not that important which $p$ corresponds to which points but how close we are to the Pareto optimal set considering \gain{} and \exploitability, which the RNR shows. Unsurprisingly, we see that CDRNR cannot achieve any \gain{} and \ABD{} is mimicking RNR exactly in this simple game.

\begin{figure}
    \centering
    \includegraphics[width=\linewidth]{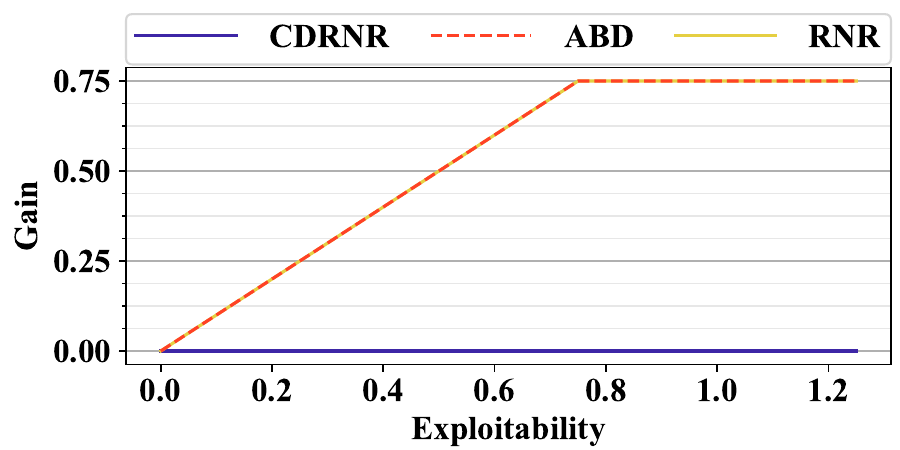}

    \caption{Results of \ABDshort{} in Battleships against an opponent who does not shoot to the top left corner. ABD coincides with RNR, and CDBR is not able to exploit the strategy with depth = 2}
    \Description{Results in Battleships.}
    \label{fig:battleships_abd_results}
\end{figure}

\subsection{Varying $p$ in Leduc Hold'em}
In this experiment, we evaluate the performance of \ABDshort{} in a more realistic setting using the Leduc Hold’em poker game. We test two types of opponents: a uniform opponent, who plays without any particular strategic bias, and an exploitable opponent, who exhibits predictable tendencies. Specifically, the exploitable opponent only checks or folds in the first round and only bets or calls in the second round. We will denote the opponent's strategy S1.

Player 1 uses a portfolio of strategies consisting only of a tight passive strategy (which only involves checking and folding) and a loose aggressive strategy (which consists of betting and calling when a bet is impossible). Values at the depth limit are sampled using 10 samples per history.

Our experiment aims to assess how \ABDshort{} adapts to these distinct opponent behaviors across various settings of p. Figure~\ref{fig:leduc_abd_results} shows the \exploitability{} and the \gain{} achieved by player 1 to compare performance against both the uniform and the exploitable opponent. We see that \ABD{} cannot reconstruct an unexploitable strategy because of the limited portfolio. However, against both players, the \ABD{} is able to recover strategies overall closer to the optimal RNR set.

We see that the main difference is in the ability to exploit the S1. This is expected since we are now using all the information about the opponent and not only the information available before the current depth-limit. Since this is the most interesting part of the comparison, our next experiments will be focused on those extreme cases where $p = 1$.

\begin{figure}
    \centering
    \includegraphics[width=\linewidth]{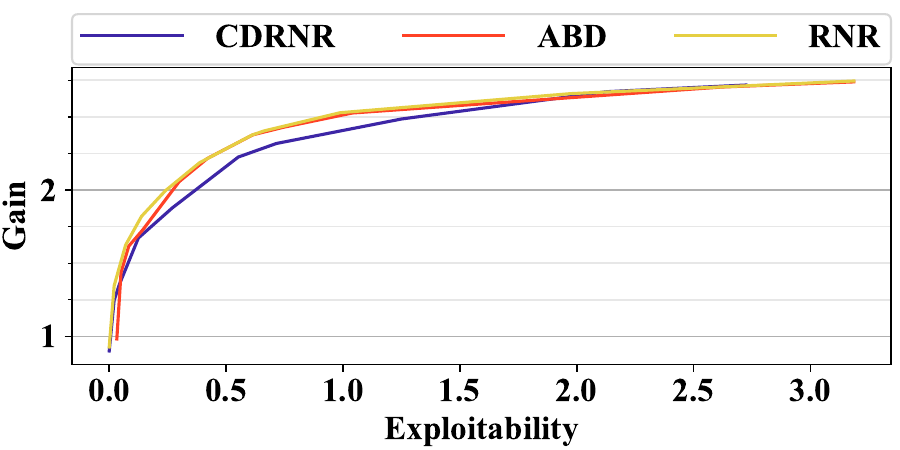}
    
    \includegraphics[width=\linewidth]{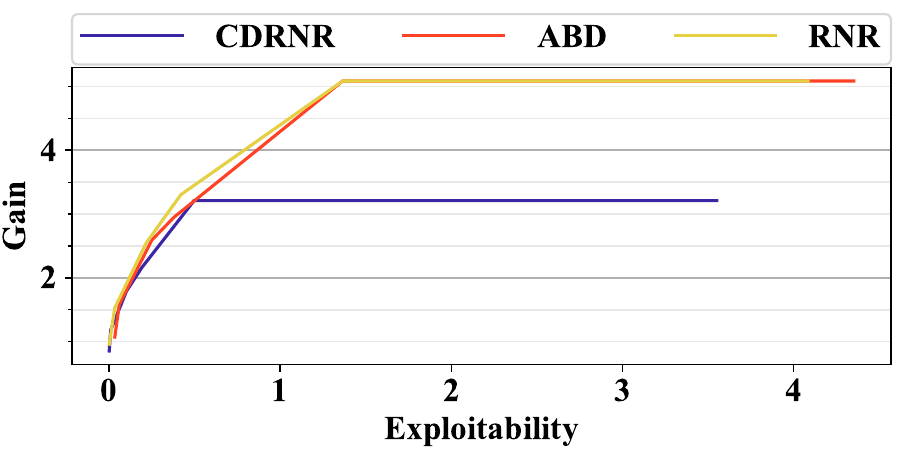}
    \caption{Results of \ABDshort{} in Leduc hold'em against (top) uniform opponent and (bottom) opponent playing S1. A comparison is made between \ABDshort{}, the previous method (CDRNR), and the theoretical upper limit (RNR).}
    \label{fig:leduc_abd_results}
    \Description{Results in Leduc hold'em.}
\end{figure}

\subsection{More Leduc Strategies}
Next, the Leduc experiment uses the same setup and strategies; $S1$ is the same as in the previous experiment. Also, we add strategies: $S_2$, which operates in a reverse manner, always bets or calls in the first round and always folds or checks in the second round; $S_3$, which is based on $S_1$ but folds in the second round when only folding or calling is available; and $S_4$, which mirrors $S_2$ with the same folding modification.

We evaluate \ABD{} against the four opponent strategies alongside CDBR and show the results in Table~\ref{tab:leduc_results_cdbr_abd}. Additionally, we conducted experiments against 1000 randomly generated strategies denoted Random, where \ABDshort{} significantly outperforms CDBR, although by a much smaller margin. We report a 95\% confidence interval.

\begin{table}[htbp]
    \begin{tabular}{|l|c|c|c|c|c|}
        \hline
        & S1 & S2 & S3 & S4 & Random \\ \hline
        \textbf{CDBR} & 1 & 5 & 1 & 3 & 2.475 ± 0.016 \\ \hline
        \textbf{\ABDshort{} ($p = 1$)} & 2.3 & 5 & 4.2 & 5 & 2.536 ± 0.015 \\ \hline
    \end{tabular}
    \caption{Results of CDBR and \ABDshort{} against S1 to S4 and Random in Leduc hold'em.}
    \label{tab:leduc_results_cdbr_abd}
\end{table}

\subsection{Large Game}
Our final experiment examines 5x5 Battleships with two 2x2 ships, which have over $10^{50}$ possible histories. The opponent places ships uniformly and shoots uniformly, with the exception of the top-left corner, which they target only as a last move. We chose a uniform shooting strategy to avoid giving our method an unfair advantage that might arise from a more deterministic opponent. This configuration allows us to exploit the opponent’s behavior by placing ships in the top left corner, which they will never target.

Our approach utilizes a portfolio of three strategies: one that shoots uniformly across the grid and two that target even and odd cells, respectively. We use 100 samples per history to evaluate leaves in the depth-limited tree and compute the strategy with depth limit 2. \ABDstart{} correctly identifies placing the ship in the top left corner as the best response.

To assess the necessary samples in more noisy settings, we trained against an opponent who can occasionally target the top-left corner (with a 5\% probability); we analyzed the fraction of games in which our method successfully identified the best response over 100 trials and reported the results in Table~\ref{tab:winrate_samples}. This indicates that utilizing 100 samples in this configuration, which was our initial setting, is more than sufficient for achieving reliable performance.

\begin{table}[tb]
    \centering
    \begin{tabular}{|l|c|c|c|c|c|}
        \hline
        \textbf{Samples} & 1 & 2 & 3 & 4 & 5 \\ \hline
        \textbf{Correct} & 83\% & 88\% & 95\% & 98\% & 100\% \\ \hline
    \end{tabular}
    \caption{Portion of correctly recovered best responses with different sample sizes.}
    \label{tab:winrate_samples}
\end{table}

\section{Conclusion}

This paper addressed the challenge of safely adapting to subrational opponents in large, imperfect information games where computational constraints necessitate depth-limited solving. Previous approaches rely on simplified assumptions about the opponent’s strategy beyond the depth limit, which can hinder the ability to respond effectively. To overcome these limitations, we introduced a novel framework that leverages matrix-valued states to represent the opponent's strategy beyond the depth limit.

We showed that in terms of robust adaptation to random opponents, our method slightly outperforms the previous state of the art method, CDRNR. However, when facing opponents who make mistakes in later parts of the game, our method significantly outperforms CDRNR, with a twofold increase in achieved \gain{}.

Theoretical results show that the idealized setting can recover the theoretical restricted Nash response -- the theoretical optimum -- for the given opponent.
We also identified two important simplifications when $p$ -- the confidence in the opponent model -- is set to one, allowing for a significant speed up in computation.

%%%%%%%%%%%%%%%%%%%%%%%%%%%%%%%%%%%%%%%%%%%%%%%%%%%%%%%%%%%%%%%%%%%%%%%%

%%% The acknowledgments section is defined using the "acks" environment
%%% (rather than an unnumbered section). The use of this environment 
%%% ensures the proper identification of the section in the article 
%%% metadata as well as the consistent spelling of the heading.

\begin{acks}
This research was supported by the Czech Science Foundation (grant no. GA22-26655S and grant no. GA25-18353S).
\end{acks}

%%%%%%%%%%%%%%%%%%%%%%%%%%%%%%%%%%%%%%%%%%%%%%%%%%%%%%%%%%%%%%%%%%%%%%%%

%%% The next two lines define, first, the bibliography style to be 
%%% applied, and, second, the bibliography file to be used.

\bibliographystyle{ACM-Reference-Format} 
\bibliography{bibliography}

%%%%%%%%%%%%%%%%%%%%%%%%%%%%%%%%%%%%%%%%%%%%%%%%%%%%%%%%%%%%%%%%%%%%%%%%

\end{document}